
\documentclass[letterpaper, 10 pt, conference]{ieeeconf}  

\IEEEoverridecommandlockouts                              

\overrideIEEEmargins                                      




\usepackage{graphics} 
\usepackage{epsfig} 
\usepackage{mathptmx} 
\usepackage{times} 
\usepackage{amsmath} 

\usepackage{amsthm}
\usepackage{amssymb}  
\usepackage[T1]{fontenc}
\usepackage{arydshln}
\usepackage{subfigure} 
\usepackage{cite}
\usepackage{textcomp}
\usepackage{stfloats}
\usepackage{hyperref}
\hypersetup{hypertex=true,
	colorlinks=true,
	linkcolor=black,
	anchorcolor=black,
	citecolor=black}
\usepackage{algorithm}  
\usepackage{algpseudocode}  
\usepackage{algorithmicx}
\setlength{\abovecaptionskip}{-10pt}   

\title{\LARGE \bf
Event-Triggered Optimal Formation Tracking Control Using Reinforcement Learning for Large-Scale UAV Systems
}

\author{Ziwei Yan$^{1}$, Liang Han$^{1^*}$, Xiaoduo Li$^{1}$, Jinjie Li$^{2}$ and Zhang Ren$^{2}$
	\thanks{This work was supported in part by the Science and Technology Innovation 2030 Key Project of "New Generation Artificial Intelligence" under Grant 2018AAA0102305, and in part by the National Natural Science Foundation of China under Grant 61803014 and Grant 61873011.}
\thanks{*Corresponding author: L. Han,
	{\tt\small liang\_han@buaa.edu.cn.}}
\thanks{$^{1}$Z. Yan, L. Han and X. Li are with the Sino-French Engineer School, Beihang University, Beijing 100191, China.
        {\tt\small yanziwei@buaa.edu.cn; liang\_han@buaa.edu.cn; lixiaoduo@sjtu.edu.cn.}}%
\thanks{$^{2}$J. Li and Z. Ren are with the School of Automation Science and Electrical Engineering, Beihang University, Beijing, 100191, China. {\tt\small lijinjie@buaa.edu.cn; renzhang@buaa.edu.cn.}}
}

\begin{document}

\maketitle
\thispagestyle{empty}
\pagestyle{empty}
\newtheorem{definition}{Definition}
\newtheorem{assumption}{Assumption}
\newtheorem{theorem}{Theorem}
\newtheorem{lemma}{Lemma}
\newtheorem{remark}{Remark}

\begin{abstract}
Large-scale UAV switching formation tracking control has been widely applied in many fields such as search and rescue, cooperative transportation, and UAV light shows. In order to optimize the control performance and reduce the computational burden of the system, this study proposes an event-triggered optimal formation tracking controller for discrete-time large-scale UAV systems (UASs). And an optimal decision - optimal control framework is completed by introducing the Hungarian algorithm and actor-critic neural networks (NNs) implementation. Finally, a large-scale mixed reality experimental platform is built to verify the effectiveness of the proposed algorithm, which includes large-scale virtual UAV nodes and limited physical UAV nodes. This compensates for the limitations of the experimental field and equipment in real-world scenario, ensures the experimental safety, significantly reduces the experimental cost, and is suitable for realizing large-scale UAV formation light shows.

\end{abstract}

\section{Introduction}
In recent years, enormous studies on cooperative control for multi-UAV system are overwhelming. Among them, multi-UAV formation tracking control is a common and fundamental problem, and the formation scale has changed from a few UAV groups previously to large-scale formations nowadays. With the exponential growth of formation scale, the system performance is more rigorously required. Hence, researchers proposed optimal control that optimize system performance and formation cost to reduce the system burden.

Optimal control is an effective strategy to balance the system performance and computational consumption. 
For discrete-time systems, the optimal control can be achieved by solving the Bellman equation \cite{c1,c2,c3,c4}. However, since the analytic solution of the Bellman nonlinear equation is difficult to be derived directly, all studies above consider the linear systems. In order to approximate optimal controllers for nonlinear systems, reinforcement learning (RL) has been introduced in \cite{c5,c6,c7,c8,c81,c82}, where the actor-critic neural network (NN) is the commonly used RL architecture. The critic NN gives feedback to optimize the actions by evaluating the system performance index, while the actor NN issues optimized control commands to improve the system behaviors. 
\begin{figure}[!htbp]
	\centering
	\includegraphics[width=\columnwidth]{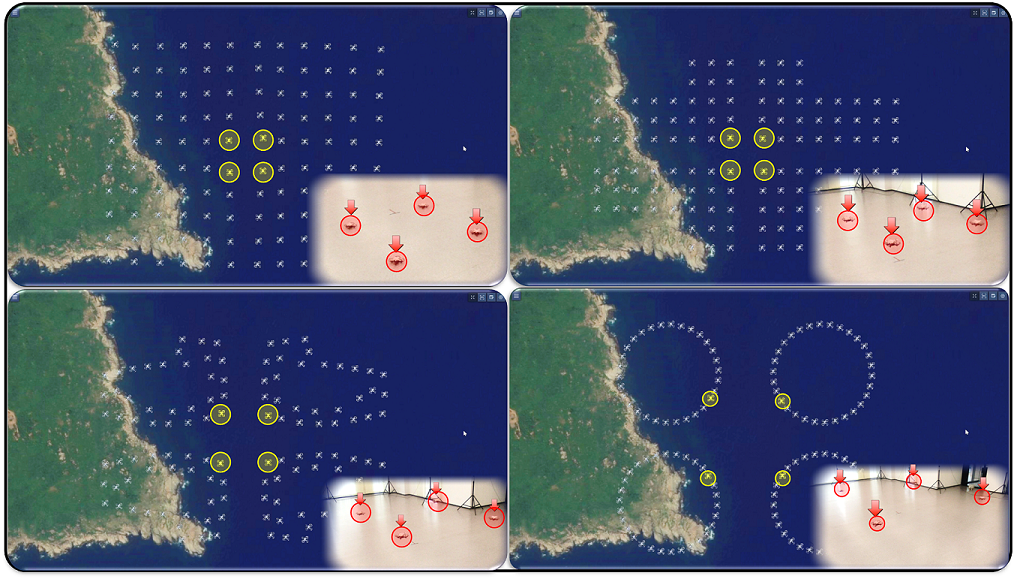}
	\caption{Hundreds of virtual UAVs and four physical UAVs in a large-scale mixed reality experimental platform to realize a UAV formation light show. \label{Fig0}}
	\vspace{-2em} 
\end{figure} 

Last few years, RL architectures have been widely used in optimal cooperative control. In \cite{c9}, a learning-based adaptive dynamic programming algorithm is used to solve the formation tracking problem for multi-UAV systems with model uncertainty in obstacle environments. In \cite{c10}, for the multi-agent tracking control with actuator faults, an adaptive optimal fault-tolerant  tracking controller is designed by combining the RL algorithm and the backstepping method. 
However, the studies above consider the time-triggered mechanism, which will continuously or periodically update the controller and NNs. This will inevitably cause a huge computational consumption when the formation size is large.

This study introduces the event-triggered mechanism where the UAV updates the controller and actor NN only when the specific condition is triggered, which allows the UAV system (UAS) to update control commands on demand, aperiodically rather than continuously and indiscriminately. At the same time, it means that the event-triggered mechanism can effectively reduce the system computational burden compared to time-triggered cases. For the RL-based optimal cooperative problems, a number of studies have introduced the event-triggered mechanism as an improvement, whether for continuous-time systems \cite{c12,c13,c14,c14_1,c14_2,c14_3} and discrete-time systems \cite{c15,c16,c17,c17_1,c17_2,c17_3}. However, these studies only implemente numerical simulations without verifying the algorithm effectiveness for engineering applications. 

In order to verify the efficiency of the proposed algorithm, a large-scale mixed reality experimental platform is established in this study. A large-scale UAV light show is realized using a limited number of physical UAVs and hundreds of virtual UAVs (Fig. \ref{Fig0}). In view of the main limitations of large-scale formation show, such as high requirement for experimental field, expensive experimental equipment, and poor safety factor, the established experimental platform has the advantages of low experimental cost and high safety factor. And the experimental results show that this platform only requires one room and four UAVs to achieve the desired effect of hundreds of UAVs in the large airspace.

This paper proposes an event-triggered optimal formation tracking controller for discrete-time large-scale UASs, and an online actor-critic NN structure is established to implement the control. The main contributions of this study include:
\begin{enumerate}
	\item Considering the main difficulties involved in the decision and control layers for large-scale switching formations, an algorithm framework integrating optimal formation assignment and optimal control is established.
	\item The event-triggered online actor-critic NN is proposed to approximate the local performance index and learn controller. Compared with \cite{c5,c6,c7,c8,c81,c82,c9,c10}, actor NN weights and controller only update at triggering instants.
	\item This study highlights the engineering application of the proposed algorithm, establishes a mixed reality experimental platform for large-scale UAV formation, and verifies the proposed algorithm on this platform.
\end{enumerate}
\begin{figure*}[!pt]
	\centering
	\includegraphics[width=\linewidth,scale=1.00]{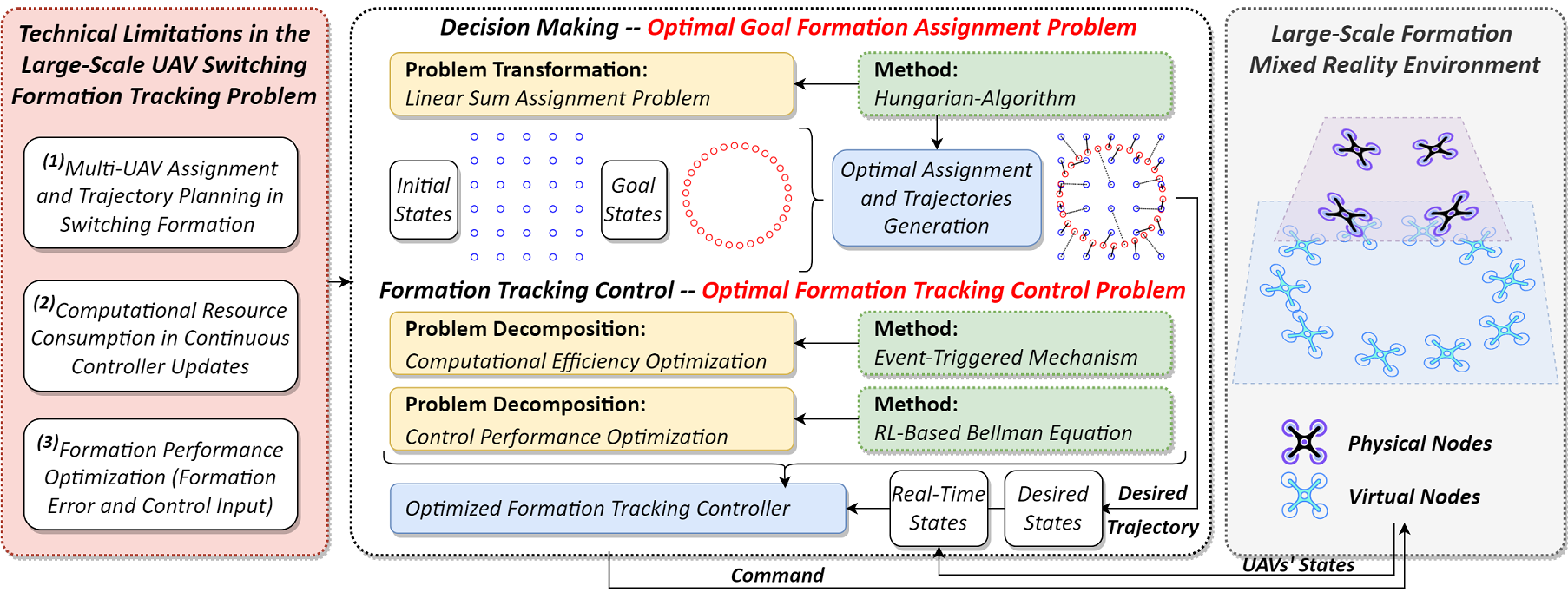}
	\caption{The algorithm overview of this study. To solve the large-scale UAV switching formation tracking problem, for the decision layer, the Hungarian algorithm is used to solve the goal formation assignment problem. For the control layer, the event-triggered mechanism and the RL-based Bellman equation are used to optimize the computational efficiency and the control performance of the system, respectively.}	
	\label{Fig1}
	\vspace{-0.5em} 
\end{figure*}
\section{Algorithm Overview}\label{sec2}
 As shown in Fig. \ref{Fig1}, an algorithm framework is designed to solve the large-scale UAV optimal formation tracking problem. And the algorithms for optimal assignment and optimal control are summarized in Algorithm \ref{alg:1} and Algorithm \ref{alg:2}.
\begin{algorithm}[!hb] 
	\caption{Decision Layer - Optimal Assignment}
	\label{alg:1}
	
	{\bf Input:} 1) Initial position set $\vec{\mathbf{G}}$; 2) Expected formation is $\vec{\mathbf{S}}$; \\ \hspace*{0.43in}3) UAV number $N$.
	
	{\bf Output:} 1) Target position set $\vec{\mathbf{F}}$; 2) Optimal assignment
	\\ \hspace*{0.48in} list $\chi^*$; 3) Desired switching trajectory $h(k)$.
	\begin{algorithmic}[1] 
		\State [$\vec{\mathbf{F}}$, $\chi^*$, $h(k)$]=Optimal\_Assignment($\vec{\mathbf{G}}$, $\vec{\mathbf{S}}$, $N$).
	\end{algorithmic}
\end{algorithm}
\begin{algorithm}[!ht]
	\caption{Control Layer - Optimal Control}
	\label{alg:2}
	
	{\bf Input:} 1) Desired trajectory $h(k)$ 2) Real-time states $x(k)$.
	
	{\bf Output:} 1) Optimized controller $u(k)$; 2) Triggering instants\\ \hspace*{0.48in}  sequence $\{k_l^i\}_{l\in\mathbb{Z}^+}$.

	\begin{algorithmic}[1] 
		\State Initialize the actor-critic NN;
		\State Calculate triggering condition $f_i(k)$ with $h_i(k)$ and $x_i(k)$;
		\If{$f_i(k)>0$}
		\State Using actor NN to learn the controller $u_i(k)$;
		\State Calculate and update the actor NN;
		\State The corresponding instant $k$ is recorded in $\{k_l^i\}_{l\in\mathbb{Z}^+}$. 
		\Else
		\State $u_i(k+1)=u_i(k)$;
		\State The actor NN remains the same as before.
		\EndIf
		\State Calculate and update the critic NN;
		\State Approximate the system performance index.
	\end{algorithmic}
\end{algorithm}

\section{Preliminaries} \label{sec3}

\subsection{Graph Theory}
For a large-scale UAS with a virtual leader and $N$ followers, the communication network can be described as a graph $\mathcal{G}=\{\mathcal{A},\mathcal{E},\mathcal{W}\}$, where $\mathcal{A}=\{a_1,\dots,a_N\}$ denotes the node set, $\mathcal{E}$ is the edge set and $e_{ij}=(a_i,a_j)\in\mathcal{E}$, and $\mathcal{W}$ is the adjacency matrix with elements $w_{ij}$, for $e_{ij}\in\mathcal{E},~i\neq j$, $w_{ij}=1$, otherwise $w_{ij}=0$. Define $N_i$ as the neighbor set of UAV $a_i$. Then, the in-degree matrix $\mathcal{D}=\mathrm{diag}_{i\in[1,N]}\{\sum_{j\in N_i}w_{ij}\}$. Laplacian matrix is $L=\mathcal{D}-\mathcal{W}$. Further, a diagonal matrix $B=\mathrm{diag}_{i\in[1,N]}\{b_i\}$ describes the communication between followers and the virtual leader. Assume that the leader can communicate with at least one follower, and there exists a spanning tree with the leader as the root in the directed group $\hat{\mathcal{G}}=\{\hat{\mathcal{A}},\hat{\mathcal{E}}\}$ with $\hat{\mathcal{A}}=\{a_0,\dots,a_N\}$ and $\hat{\mathcal{E}}\in\hat{\mathcal{A}}\times\hat{\mathcal{A}}$. 
\begin{assumption}
	The graph $\mathcal{G}$ is connected and the directed graph $\hat{\mathcal{G}}$ contains a spanning tree.
\end{assumption}

\subsection{Radial Basis Function (RBF) NNs}
An unknown nonlinear function $\zeta(\varOmega)$ can be approximated by a RBF NN,	$\zeta(\varOmega)=W^{*\mathrm{T}}\psi(\varOmega)$, where $W^*\in\mathbb{R}^s$ is a weight vector, $\psi(\varOmega)\in\mathbb{R}^s$ is a basis function vector with $\psi(\varOmega)=\left[\psi_1(\varOmega),\dots,\psi_s(\varOmega)\right]^\mathrm{T}$, and $s\in\mathbb{Z^+}$ represents the number of NN nodes in the hidden layer. In this paper, $\psi_i(\varOmega)$ is chosen as the Gaussian Basis Function
\begin{equation}
	\psi_i(\varOmega)=\exp\left(-(\varOmega-\nu_i)^\mathrm{T}(\varOmega-\nu_i)/2\iota_i^2 \right),~i=1,\dots,s.
\end{equation}
where $\nu_i$ is the center of the Gaussian kernel function and $\iota_i$ is the width parameter of the function. For the approximation of the RBF NN, it is always possible to find an ideal weight $W^*=\arg_{\min W}\{\sup_{\varOmega}||\zeta(\varOmega)-W^\mathrm{T}\psi(\varOmega)||\}$. As a result, the unknown function $\zeta(\varOmega)$ can be written as
\begin{equation}
	\zeta(\varOmega)=W^{*\mathrm{T}}\psi(\varOmega)+\sigma(\varOmega).
\end{equation}
where $\sigma(\varOmega)$ is the approximation error with $|\sigma(\varOmega)|\leq\sigma_T$, and $\sigma_T$ is a constant.

\subsection{Problem Description}
Consider a large-scale UAS with $N$ followers and a virtual leader, each UAV $a_i$ $(i\in\{0,\dots,N\})$ is modeled as:
\begin{equation}\label{multi-UAV system}
	\begin{cases}
		p_i(k+1)=p_i(k)+v_i(k)T\\
		v_i(k+1)=v_i(k)+u_i(k)T+f_i(p_i(k),v_i(k)).
	\end{cases}
\end{equation}
where $p_i(k)\in\mathbb{R}^m$ and $v_i(k)\in\mathbb{R}^m$ are the position and velocity of the UAV $a_i$, $u_i(k)\in\mathbb{R}^m$ is the control input with $m$ the movement dimension of UAV, and $f_i(\cdot)$ is the unknown bounded nonlinear function. For the virtual leader $a_0$, $u_0(k)$ is known and $f_0(\cdot)=0$.

Then, for each follower UAV $a_i$, the position tracking error $\xi_{pi}\in\mathbb{R}^m$ and velocity tracking error $\xi_{vi}\in\mathbb{R}^m$ are defined as
\begin{equation}
	\begin{cases}
		\xi_{pi}(k)=p_i(k)-p_0(k)-\eta_{pi}(k)\\
		\xi_{vi}(k)=v_i(k)-v_0(k)-\eta_{vi}(k).
	\end{cases}
\end{equation}
where $\eta_i(k)=[\eta_{pi}^\mathrm{T}(k),\eta_{vi}^\mathrm{T}(k)]^\mathrm{T}\in\mathbb{R}^{2m}$ is the desired state difference between the leader $a_0$ and the follower $a_i$. Then, define the disagreement error for UAV $a_i$ as
\begin{equation}\label{FTE}
	\begin{cases}
		\begin{aligned}
			\varepsilon_{pi}(k)&=\sum_{j\in N_i}w_{ij}(\xi_{pi}(k)-\xi_{pj}(k))+b_i\xi_{pi}(k)\\
			\varepsilon_{vi}(k)&=\sum_{j\in N_i}w_{ij}(\xi_{vi}(k)-\xi_{vj}(k))+b_i\xi_{vi}(k).
		\end{aligned}
	\end{cases}
\end{equation}
Let $\varepsilon(k)=[\varepsilon_1^\mathrm{T}(k),\dots,\varepsilon_N^\mathrm{T}(k)]^\mathrm{T}$ with $\varepsilon_i(k)=[\varepsilon_{pi}^\mathrm{T}(k),\varepsilon_{vi}^\mathrm{T}(k)]^\mathrm{T}$.
Define the local performance index function (PIF) as
\begin{equation}\label{PIF}
	V_i(\varepsilon_i(k),u_i(k))=\sum_{l-k}^{\infty}\beta^{l-k}R_i(\varepsilon_i(k),u_i(\varepsilon_i)),
\end{equation}
where $R_i(\varepsilon_i(k),u_i(\varepsilon_i))=\varepsilon_i^\mathrm{T}(k)\varepsilon_i(k)+u_i^\mathrm{T}(\varepsilon_i)u_i(\varepsilon_i)$ is the utility function.

\section{Optimal Assignment and Optimal Control}\label{sec4}
In this section, the algorithm overview shown in Fig. \ref{Fig1} is expanded and detailed separately in three parts: optimal formation assignment, event-triggered optimal controller design, and online actor-critic NNs implementation. 
\begin{figure}[!ht]
	\centering
	\includegraphics[width=\columnwidth]{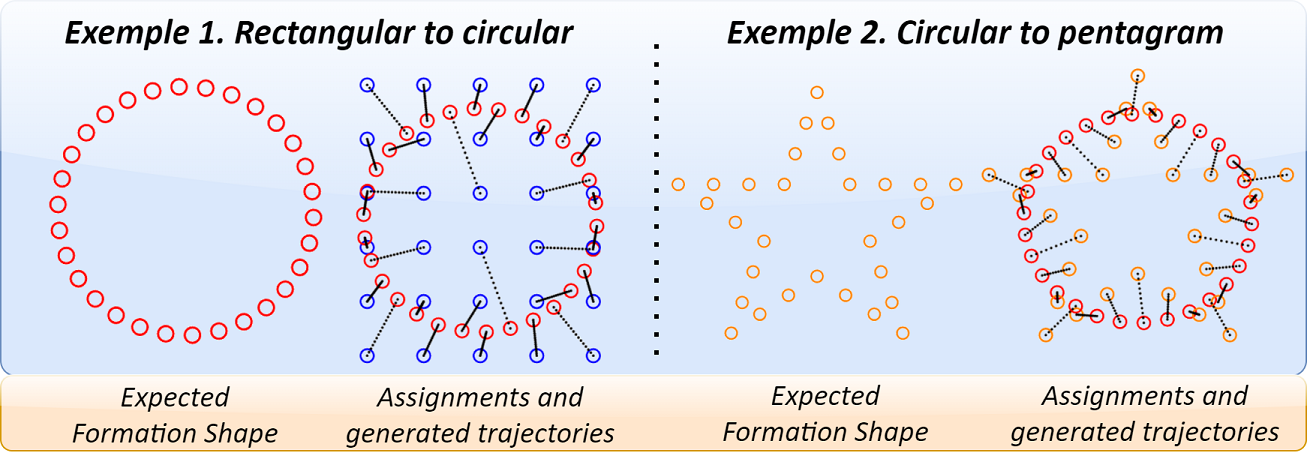}
	\caption{Two examples of switching formations, with the expected formation shape on the left and the assignments and generated trajectories on the right. \label{Fig7}}
	 \vspace{-2em}  
\end{figure}
\subsection{Optimal Assignment for Switching Formations}
For a system with $N$ UAVs of radius $r$, suppose its initial position set is $\vec{\mathbf{G}}=\{\vec{\mathbf{g}}_i^\mathrm{T}\}\in\mathbb{R}^{N\times m}$, the expected formation is $\vec{\mathbf{S}}=\{\vec{\mathbf{s}}_j^\mathrm{T}\}\in\mathbb{R}^{N\times m}$, and the target position set is $\vec{\mathbf{F}}=\{\vec{\mathbf{f}}_j^\mathrm{T}\}\in\mathbb{R}^{N\times m}$ with $\vec{\mathbf{f}}_j=\rho\vec{\mathbf{s}}_j+\vec{d}$, where $\rho\in\mathbb{R}^+$ is the scale factor and $\vec{d}$ is the translation vector. 
 \begin{lemma}\cite{c18}
 	The optimal scale factor $\rho^*$ and translation vector $\vec{d}^*$ can be calculated as
 	\begin{equation}
 		\begin{aligned}
 			\rho^*&=\dfrac{\sum_{i=1}^{N}\vec{\mathbf{g}}_i(\sum_{j=1}^{N}\vec{\mathbf{s}}_j)^\mathrm{T}+N\kappa^*}{\sum_{j=1}^{N}\vec{\mathbf{s}}_j(\sum_{j=1}^{N}\vec{\mathbf{s}}_j)^\mathrm{T}-N\sum_{j=1}^N\vec{\mathbf{s}}_j\vec{\mathbf{s}}_j^\mathrm{T}},\\
 			\vec{d}^*&=\dfrac{\sum_{i=1}^{N}\vec{\mathbf{g}}_i-\rho^*\sum_{j=1}^{N}\vec{\mathbf{s}}_j}{N}.
 		\end{aligned}
 	\end{equation} 
 \end{lemma}
\begin{algorithm}[!hb] 
	\caption{\cite{c18} Pseudo-Cost Assignment Algorithm}
	\label{alg:3}
	
	{\bf Input:} 1) Initial position set $\vec{\mathbf{G}}$; 2) Expected formation is $\vec{\mathbf{S}}$.
	
	{\bf Output:} 1) Optimal assignment list $\chi^*\in\mathbb{Z}^N$; 2) Minimum\\ \hspace*{0.48in} cost matrix $\kappa^*\in\mathbb{R}^{N\times N}$. 
	\begin{algorithmic}[1] 
		\State $\kappa_{ij}=-\vec{\mathbf{g}}_i\vec{\mathbf{s}}_j^\mathrm{T}.$
		\State $(\chi^*,\kappa^*)=\text{Hungarian-LSAP}(\kappa).$
	\end{algorithmic}
\end{algorithm}

  Algorithm \ref{alg:3} is introduced to enable UAVs to reach the target positions simultaneously without collision. Define the formation switching duration as $t_s=\max_{i\in[1,N]}||\vec{\mathbf{g}}_i-\vec{\mathbf{f}}_{\chi(i)}||_2/\\v_{\max}$, where $\chi(i)$ is the position index assigned to UAV $a_i$, and $v_{\max}$ is the maximum allowable velocity. Then, the desired switching trajectory for UAV $a_i$ is
\begin{equation}
	h_i(k)=\vec{\mathbf{g}}_i+\dfrac{\vec{\mathbf{f}}_{\chi(i)}-\vec{\mathbf{g}}_i}{t_s}kT,\qquad k\in[0,\dfrac{t_s}{T}].
\end{equation}
And the collision avoidance conditions of UAVs are noted $||\vec{\mathbf{g}}_i-\vec{\mathbf{g}}_j||_2\geq2\sqrt{2}r$ and $||\vec{\mathbf{s}}_i-\vec{\mathbf{s}}_j||_2\geq2\sqrt{2}r$.
\subsection{Event-Triggered Optimal Controller Design}

Define the triggering instants sequence for the UAV $a_i$ as $\{k_l^i\}_{l\in\mathbb{Z}^+}$, which is determined by the following condition,
\begin{equation}\label{triggering condition}
	f_i(k)=||e_l^i(k)||^2-\dfrac{(1-2\kappa^2)}{2\kappa^2}||\varepsilon_i(k)||^2,
\end{equation}
where $\kappa\in(\dfrac{1}{2},\dfrac{\sqrt{2}}{2})$, $\varepsilon_i(k)$ is the current disagreement error and the controller $u_i(k)$ only updates at the triggering instant $k_l^i=\inf\{k>k_{l-1}^i,f_i(k)>0\}$, that is, $\hat{u}_i(k)=u_i(\varepsilon_i(k_l^i))$, where $\varepsilon_i(k_l^i)$ is the triggering error. The zero-order holder (ZOH) is used until next triggering instant occurs, one has
\begin{equation}
	u_i(k)=\hat{u}_i(k)=u_i(k_l^i),\quad k\in[k_l^i,k_l^{i+1}).
\end{equation}
\begin{lemma}\cite{c16}\label{assumption3}
	There exists a positive constant $\kappa$ satisfying the inequality as follows,
	\begin{equation}
		||\varepsilon_i(k+1)||=||g_i(\varepsilon_i(k),u_i(k_l^i))||\leq\kappa||\varepsilon_i(k)||+\kappa||e_l^i(k)||
	\end{equation} 
	where $e_l^i(k)=\varepsilon_i(k_l^i)-\varepsilon_i(k),~k\in[k_l^i,k_l^{i+1})$.
And when the event is triggered, $e_l^i(k)=0$.
\end{lemma}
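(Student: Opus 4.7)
The plan is to treat $g_i(\cdot,\cdot)$ as the closed-loop map generated by the disagreement-error dynamics driven by the zero-order-hold (ZOH) controller $\hat u_i$, and to split the bound on $\|\varepsilon_i(k+1)\|$ into a term proportional to the current error $\|\varepsilon_i(k)\|$ and a term proportional to the sampling mismatch $\|e_l^i(k)\|$ via a Lipschitz-type estimate. Since the lemma is quoted from \cite{c16}, the aim is to reconstruct the argument in a form compatible with the notation of the present paper.

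First I would write out the closed-loop error dynamics of $\varepsilon_i$ explicitly. Using \eqref{multi-UAV system} together with the definition of $\varepsilon_i$ in \eqref{FTE}, one obtains an update of the form
\begin{equation*}
\varepsilon_i(k+1) = \Phi\,\varepsilon_i(k) + \Gamma\bigl(\gamma_i u_i(k_l^i)T + \gamma_i f_i - \textstyle\sum_{j\in N_i} w_{ij}(u_j(k_l^j)T + f_j) - b_i u_0(k)T\bigr),
\end{equation*}
where $\Phi,\Gamma$ come from the discretized double integrator and $\gamma_i = \sum_{j\in N_i} w_{ij}+b_i$. This is precisely $g_i(\varepsilon_i(k),u_i(k_l^i))$. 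Next, I would exploit $\varepsilon_i(k_l^i)=\varepsilon_i(k)+e_l^i(k)$ to rewrite $u_i(k_l^i)=u_i(\varepsilon_i(k)+e_l^i(k))$, and apply the triangle inequality:
\begin{equation*}
\|g_i(\varepsilon_i(k),u_i(k_l^i))\| \le \|g_i(\varepsilon_i(k),u_i(\varepsilon_i(k)))\| + \|g_i(\varepsilon_i(k),u_i(\varepsilon_i(k)+e_l^i(k))) - g_i(\varepsilon_i(k),u_i(\varepsilon_i(k)))\|.
\end{equation*}
The first term is the continuously-fed (non-triggered) closed-loop map evaluated at $\varepsilon_i(k)$, which is assumed contractive thanks to the optimal feedback $u_i^*$ learned by the actor NN, yielding a bound of the form $c_1\|\varepsilon_i(k)\|$. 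The second term is controlled through local Lipschitz continuity of $u_i(\cdot)$ and of the nonlinearity $f_i(\cdot)$, giving a bound $c_2\|e_l^i(k)\|$.

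Setting $\kappa := \max\{c_1,c_2\}$ then gives the inequality in the form stated. The case $e_l^i(k)=0$ at a triggering instant is immediate because the second term vanishes and the bound reduces to $\kappa\|\varepsilon_i(k)\|$, consistent with the sentence in the lemma.

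The hard part will not be the triangle inequality itself but rather ensuring that the \emph{same} constant $\kappa$ can serve for both terms \emph{and} lie in the narrow window $(\tfrac{1}{2},\tfrac{\sqrt{2}}{2})$ imposed by the triggering rule \eqref{triggering condition}. This requires (i) the learned controller $u_i(\cdot)$ to be Lipschitz with a controlled constant --- which in turn relies on bounded actor-NN weights and a bounded RBF activation $\psi(\cdot)$ --- and (ii) the sampling period $T$, the graph weights $w_{ij},b_i$, and the bound on $f_i$ to be small enough that $c_1,c_2$ both fit beneath $\sqrt{2}/2$. Verifying these hypotheses is what \cite{c16} does in detail; in the present paper they are inherited as part of Lemma~\ref{assumption3} and used directly in the stability analysis that follows.
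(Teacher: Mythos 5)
The paper does not actually prove this statement: Lemma~\ref{assumption3} is imported verbatim from \cite{c16}, where it functions as a standing assumption on the closed-loop error dynamics, so there is no in-paper argument to compare yours against. Your reconstruction has the right overall shape --- a triangle-inequality split of the triggered closed-loop map into a nominal contraction in $\|\varepsilon_i(k)\|$ plus a Lipschitz perturbation in $\|e_l^i(k)\|$ --- and you correctly identify that the real difficulty is forcing the \emph{same} $\kappa$ to lie in the window $(\tfrac{1}{2},\tfrac{\sqrt{2}}{2})$ required by the triggering condition (\ref{triggering condition}).

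However, there are two genuine gaps. First, $\varepsilon_i(k+1)$ is not a function of $(\varepsilon_i(k),u_i(k_l^i))$ alone: your own expression for the error dynamics contains the neighbors' held controls $u_j(k_l^j)$, the leader input $b_i u_0(k)T$, and the nonlinearities $f_i(p_i,v_i)$, $f_j(p_j,v_j)$ evaluated at \emph{absolute} states. None of these vanish when $\varepsilon_i(k)=e_l^i(k)=0$, and none can be absorbed into $\kappa\|\varepsilon_i(k)\|+\kappa\|e_l^i(k)\|$ without further hypotheses you never state (e.g., $u_0\equiv 0$, $f_i$ expressible in error coordinates, and a bound on the coupling to the neighbors' errors $\varepsilon_j$ and mismatches $e_l^j$). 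Second, and more fatally, the bound $c_1<\tfrac{\sqrt{2}}{2}$ for the non-triggered closed loop cannot be ``verified'' for this plant: the position channel of the discretized double integrator satisfies $\varepsilon_{pi}(k+1)=\varepsilon_{pi}(k)+T\varepsilon_{vi}(k)$ with no direct control authority, so choosing $\varepsilon_{vi}(k)=0$ and $\varepsilon_{pi}(k)\neq 0$ gives $\|\varepsilon_i(k+1)\|\geq\|\varepsilon_{pi}(k)\|=\|\varepsilon_i(k)\|$, which at a triggering instant (where $e_l^i(k)=0$) forces $\kappa\geq 1$ no matter how the actor NN is tuned or how small $T$ is. A one-step contraction with $\kappa<\tfrac{\sqrt{2}}{2}$ is therefore not something Lipschitz bookkeeping can deliver for this model; it must be taken as a (strong) assumption, which is how \cite{c16} and the present paper in effect use it. Your proof plan, as written, would stall at exactly the step where you promise to fit $c_1$ beneath $\sqrt{2}/2$.
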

\begin{theorem}
	Considering the large-scale UAS $(\ref{multi-UAV system})$ following the Assumption \ref{assumption3} and the triggering condition (\ref{triggering condition}), the disagreement error is asymptotically stable and formation tracking control for the large-scale UAS $(\ref{multi-UAV system})$ is achieved.
\end{theorem}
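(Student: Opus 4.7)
The plan is to use $V(k)=\sum_{i=1}^{N}\|\varepsilon_i(k)\|^2$ as a Lyapunov function and to show that the triggering rule (\ref{triggering condition}) together with the Lemma forces a uniform geometric contraction $V(k+1)\le\mu^2 V(k)$ with $\mu\in(0,1)$. The first step is to observe that, because the next triggering instant $k_{l+1}^i$ is defined as the \emph{first} time at which $f_i$ becomes positive, one has $f_i(k)\le 0$ for every $k\in[k_l^i,k_{l+1}^i)$, i.e.
\begin{equation*}
\|e_l^i(k)\|^2\;\le\;\tfrac{1-2\kappa^2}{2\kappa^2}\,\|\varepsilon_i(k)\|^2 .
\end{equation*}
At the triggering instants themselves the ZOH resample enforces $e_l^i=0$, so this bound is valid for all $k\ge 0$.

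Next I would keep the Lemma in its Minkowski form $\|\varepsilon_i(k+1)\|\le\kappa\|\varepsilon_i(k)\|+\kappa\|e_l^i(k)\|$ (rather than passing to the looser $(a+b)^2\le 2(a^2+b^2)$) and substitute the previous bound to obtain
\begin{equation*}
\|\varepsilon_i(k+1)\|\;\le\;\Bigl(\kappa+\sqrt{\tfrac{1-2\kappa^2}{2}}\Bigr)\|\varepsilon_i(k)\|\;=:\;\mu\,\|\varepsilon_i(k)\| .
\end{equation*}
An elementary calculation shows that $\mu<1$ is equivalent to $(1-2\kappa)^2>0$, which is the content of $\kappa\neq 1/2$; combined with $\kappa<\sqrt{2}/2$ (required so the triggering threshold is nonnegative), this is precisely the prescribed interval $\kappa\in(1/2,\sqrt{2}/2)$. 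Squaring and summing over $i$ yields $V(k+1)\le\mu^2 V(k)\le\mu^{2(k+1)}V(0)$, so each $\varepsilon_i(k)\to 0$ exponentially fast.

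Finally I would translate disagreement-error convergence into formation tracking. Stacking the components in (\ref{FTE}) gives $\varepsilon_p(k)=((L+B)\otimes I_m)\,\xi_p(k)$ and $\varepsilon_v(k)=((L+B)\otimes I_m)\,\xi_v(k)$. Under Assumption 1 the directed graph $\hat{\mathcal{G}}$ contains a spanning tree rooted at the virtual leader, hence $L+B$ is nonsingular; therefore $\xi_i(k)\to 0$, which means $p_i(k)\to p_0(k)+\eta_{pi}(k)$ and $v_i(k)\to v_0(k)+\eta_{vi}(k)$, the desired formation tracking.

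The step I expect to be the main obstacle is extracting a \emph{strict} contraction from the triggering rule: a naive application of $(a+b)^2\le 2(a^2+b^2)$ would only deliver $V(k+1)\le V(k)$ and thus Lyapunov stability rather than asymptotic stability. Retaining the Minkowski form of the Lemma and identifying $(1-2\kappa)^2>0$ as the decisive algebraic fact is what explains why the \emph{open} interval $(1/2,\sqrt{2}/2)$, excluding both endpoints, is both necessary and sufficient for exponential decay of $V$ under the chosen threshold coefficient $\tfrac{1-2\kappa^2}{2\kappa^2}$.
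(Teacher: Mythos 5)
Your proof is correct, but it takes a genuinely different and in fact tighter route than the paper. The paper splits the analysis into two cases with two different Lyapunov functions: on inter-event intervals it uses $\mathcal{V}_{ei}(k)=\varepsilon_i^\mathrm{T}(k)\varepsilon_i(k)+u_i^\mathrm{T}(k)u_i(k)$ together with the quadratic bound $(a+b)^2\le 2a^2+2b^2$ to obtain exactly the estimate you warn against, $\Delta\mathcal{V}_{ei}\le 2\kappa^2\|\varepsilon_i\|^2+2\kappa^2\|e_l^i\|^2-\|\varepsilon_i\|^2\le 0$, and at triggering instants it switches to $\mathcal{V}_{ei}(k)=\beta^kV_i(\varepsilon_i(k))$ with $\Delta\mathcal{V}_{ei}=-\beta^kR_i\le 0$, then asserts asymptotic convergence. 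As you correctly diagnose, that argument only delivers non-increase of the Lyapunov candidates, i.e.\ stability, and would need an additional invariance-type step to conclude asymptotic convergence. Your single-Lyapunov-function argument, which keeps the Minkowski form of Lemma~\ref{assumption3} and substitutes the threshold bound to get $\|\varepsilon_i(k+1)\|\le\mu\|\varepsilon_i(k)\|$ with $\mu=\kappa+\sqrt{(1-2\kappa^2)/2}<1$ precisely because $(1-2\kappa)^2>0$, buys a uniform geometric contraction and hence exponential convergence — strictly more than the paper proves. Your closing step, inverting $\varepsilon=((L+B)\otimes I_m)\xi$ via nonsingularity of $L+B$ under the spanning-tree assumption, is also something the paper omits but is needed to pass from the disagreement error to the actual formation tracking error. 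One small overstatement: your algebra shows $\mu<1$ for any $\kappa\in(0,\sqrt{2}/2)$ with $\kappa\neq 1/2$, so the lower endpoint $1/2$ of the prescribed interval is not \emph{necessary} for the contraction itself (values below $1/2$ would also contract); the restriction to $\kappa>1/2$ comes from elsewhere (the role of $\kappa$ as a gain bound in Lemma~\ref{assumption3}), not from your decay condition.
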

\begin{proof}
	$\mathbf{(a)}$ For $\forall k\in[k_{l}^i,k_{l+1}^i)$, the controller $u_i(k)$ remains a constant, $u_i(k)=u_i(k+1)$. Design the Lyapunov function
	\begin{equation}
		\mathcal{V}_{ei}(k)=\varepsilon_i^\mathrm{T}(k)\varepsilon_i(k)+u_i^\mathrm{T}(k)u_i(k).
	\end{equation}
	With the Lemma $\ref{assumption3}$ and the triggering condition $(\ref{triggering condition})$, $\Delta\mathcal{V}_{ei}(k)=\mathcal{V}_{ei}(k+1)-\mathcal{V}_{ei}(k)$ can be derived as
	\begin{equation}
		\begin{aligned}
			\Delta\mathcal{V}_{ei}(k)
			&\leq 2\kappa^2||\varepsilon_i(k)||^2+2\kappa^2||e_l^i(k)||^2-||\varepsilon_i(k)||^2\leq 0.
		\end{aligned}	
	\end{equation}
	
	$\mathbf{(b)}$ For $k\in\{k_l^i\}_{l\in\mathbb{Z}^+}$, the controller $u_i(k)$ will be updated, and the Lyapunov function is designed as
	\begin{equation}
		\mathcal{V}_{ei}(k)=\beta^kV_{i}(\varepsilon_i(k)).
	\end{equation}
	Then, the first difference of the $\mathcal{V}_{ei}$ is
	\begin{equation}\label{V_ei}
		\begin{aligned}
			\Delta \mathcal{V}_{ei}
			&=-\beta^kR_i(\varepsilon_i(k),u_i(k_l^i))\leq0.
		\end{aligned}
	\end{equation}
	Overall, $\varepsilon_i(k)$ converges asymptotically, which means that event-triggered formation tracking control is achieved.
\end{proof}
\subsection{Implementation with Online Actor-Critic NNs}
The optimal controller $u_i^*(k)$ is the unique solution of $(\ref{PIF})$, and $u_i^*(k)$ can be derived by the equation $\dfrac{\partial V_i(\varepsilon_i(k))}{\partial u_i(k)}=0$,
\begin{equation}\label{optimal_controller}
	u_i^*(k)=-\dfrac{\beta\gamma_iT}{2}\dfrac{\partial V_i^*(\varepsilon_{vi}(k+1))}{\partial \varepsilon_{vi}(k+1)}.
\end{equation}
where $\gamma_i=\sum_{j\in N_i}w_{ij}+b_i$. Let $\dfrac{\partial V_i^*(\varepsilon_{vi}(k+1))}{\partial \varepsilon_{vi}(k+1)}=\dfrac{2\alpha_p\varepsilon_{pi}(k)}{\beta\gamma_iT}\\+\dfrac{2\alpha_v\varepsilon_{vi}(k)}{\beta\gamma_iT}+\dfrac{V_i^0(\varepsilon_i)}{\beta\gamma_iT}$, where constants $\alpha_p>0$, $\alpha_v>0$, and $V_i^0(\varepsilon_i)=-2\alpha_p\varepsilon_{pi}(k)-2\alpha_v\varepsilon_{vi}(k)+\beta\gamma_iT\dfrac{\partial V_i^*(\varepsilon_{vi}(k+1))}{\partial \varepsilon_{vi}(k+1)}$. Then one has
\begin{equation}\label{controller_split}
	u_i^*(k)=-\alpha\varepsilon_{i}(k)-\dfrac{V_i^0(\varepsilon_i)}{2},
\end{equation}
where $\alpha=[\alpha_p,\alpha_v]\otimes I_m$. And in order to implement the approximation of the optimal controller and the optimized PIF, actor-critic NNs are introduced.
\begin{figure}[!ht]
	\centering
	\includegraphics[width=\columnwidth]{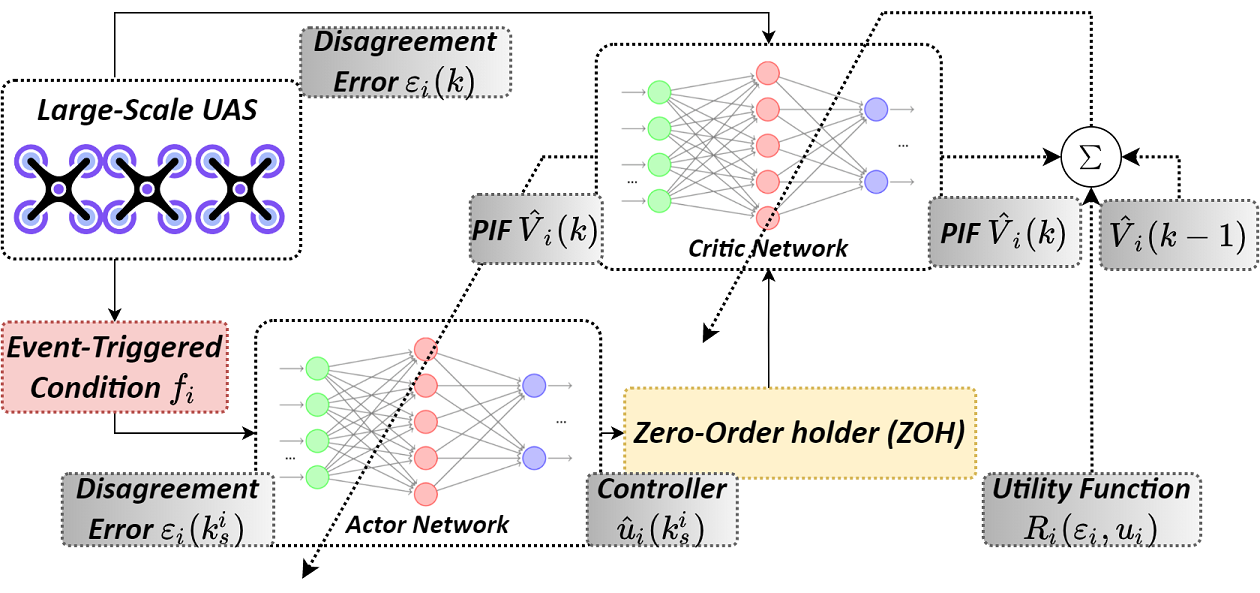}
	\caption{A framework for implementation with event-triggered online actor-critic NNs for formation tracking control. \label{Fig8}}
	\vspace{-2em} 
\end{figure}
\subsubsection{Critic NN Design}
The critic NN, which is utilized to approximate the the optimized PIF,
\begin{equation}\label{PIF NN}
	\begin{aligned}
		\dfrac{\partial \hat{V}_i(\varepsilon_{vi}(k+1))}{\partial\varepsilon_{vi}(k+1)}&=\dfrac{2\alpha\varepsilon_{i}(k)}{\beta\gamma_iT}+\dfrac{1}{\beta\gamma_iT}\hat{W}_{ci}^\mathrm{T}(k)\psi_i(\varepsilon_i),
	\end{aligned}	
\end{equation}
where $\hat{W}_{ci}\in\mathbb{R}^{s\times m}$ is the estimation of $W_{i}^{*}$, which is trained by the critic NN update law
\begin{equation}\label{critic_NN_update}
	\begin{aligned}
		\hat{W}_{ci}(k+1)&=\hat{W}_{ci}(k)-\mu_{ci}T\psi_i(\varepsilon_i)\psi_i^\mathrm{T}(\varepsilon_i)\hat{W}_{ci}(k),
	\end{aligned}
\end{equation}
where $\mu_{ci}$ is the constant gain for the critic NN. 
\subsubsection{Actor NN Design}The actor NN, which is utilized to approximate the formation tracking controller $(\ref{controller_split})$,
\begin{equation}\label{controller NN}
	u_i^*(k)=\begin{cases}
		-\alpha\varepsilon_{i}(k)-\dfrac{1}{2}\hat{W}_{ai}^{\mathrm{T}}(k)\psi_i(\varepsilon_i(k)),~ k\in\{k_l^i\}_{l\in\mathbb{Z}^+},\\
		u_i^*(k_l^i),\qquad\qquad\qquad\qquad\quad~ k\in[k_l^i,k_{l+1}^i).
	\end{cases}	
\end{equation}
where $\hat{W}_{ai}\in\mathbb{R}^{s\times m}$ is the estimation of $W_{ai}^{*}$, which is trained by the actor NN update law
\begin{equation}\label{ET_update_law}
	\hat{W}_{ai}(k+1)=\begin{cases}
		\hat{W}_{ai}(k)-\mu_{ai}T\psi_i(\varepsilon_i)\psi_i^\mathrm{T}(\varepsilon_i)\\
		\quad\times\left(  \hat{W}_{ai}(k)-\hat{W}_{ci}(k)\right),~ k\in\{k_l^i\}_{l\in\mathbb{Z}^+},\\
		\hat{W}_{ai}(k),\qquad\quad\qquad\qquad k\in[k_l^i,k_{l+1}^i).
	\end{cases}
\end{equation}
where $\mu_{ai}$ is the constant gain for actor NN.
\begin{definition}
	A system is said to be ultimatelly uniformly bounded (UUB) if there exist positive constants $\delta$ and $\varrho$, $\forall$$x(k_0)<\delta$, $\exists$$T(\varrho,\delta)\geq0$ such that
	\begin{equation*}
		\forall k>k_0+T, \qquad||x(k)||<\varrho.
	\end{equation*} 	
\end{definition}
\begin{theorem}\label{theorem2}
	For any UAV $a_i$ in the large-scale UAS $(\ref{multi-UAV system})$ with bounded initial states, the optimized event-triggered formation tracking controller $(\ref{controller NN})$ is utilized, where the iterative update laws of the actor-critic NNs are  $(\ref{critic_NN_update})$ and $(\ref{ET_update_law})$, and the constant gains involved satisfy the conditions,
	\begin{equation}\label{conditions}
		\mu_{ci}>\mu_{ai}>0,\quad 0<T<\dfrac{\mu_{ci}-\mu_{ai}}{\mu_{ci}^2\lambda_{\max}^{\psi_i\psi_i^\mathrm{T}}},
	\end{equation}
	where $\lambda_{\max}^{\psi_i\psi_i^\mathrm{T}}$ is the maximum eigenvalues of $\psi_i(k)\psi_i^\mathrm{T}(k)$. Then, the disagreement error $\varepsilon_i(k)$, and the estimation errors $\tilde{W}_{ai}(k)$, $\tilde{W}_{ci}(k)$ of the actor-critic NN weights are UUB, where $\tilde{W}_{ai}(k)=\hat{W}_{ai}(k)-W_{ai}^*$ and $\tilde{W}_{ci}(k)=\hat{W}_{ci}(k)-W_{ci}^*$.
\end{theorem}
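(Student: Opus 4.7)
\textbf{Proof Plan for Theorem \ref{theorem2}.}
The plan is to carry out a Lyapunov-based UUB argument that treats the three error signals $\varepsilon_i(k)$, $\tilde W_{ci}(k)$, $\tilde W_{ai}(k)$ jointly, and that handles the two regimes introduced by the event-triggered mechanism separately. First I would rewrite the NN update laws (\ref{critic_NN_update}) and (\ref{ET_update_law}) in terms of the estimation errors. For the critic this gives
\begin{equation*}
\tilde W_{ci}(k+1)=\bigl(I-\mu_{ci}T\,\psi_i\psi_i^{\mathrm T}\bigr)\tilde W_{ci}(k)-\mu_{ci}T\,\psi_i\psi_i^{\mathrm T}W_{ci}^{*},
\end{equation*}
and for the actor, at a triggering instant,
\begin{equation*}
\tilde W_{ai}(k+1)=\bigl(I-\mu_{ai}T\,\psi_i\psi_i^{\mathrm T}\bigr)\tilde W_{ai}(k)+\mu_{ai}T\,\psi_i\psi_i^{\mathrm T}\tilde W_{ci}(k)+\mu_{ai}T\,\psi_i\psi_i^{\mathrm T}\bigl(W_{ci}^{*}-W_{ai}^{*}\bigr),
\end{equation*}
while $\tilde W_{ai}(k+1)=\tilde W_{ai}(k)$ between triggers. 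The approximation-error terms $\sigma(\cdot)$ and the ideal weights play the role of bounded disturbances.

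Next I would propose the composite Lyapunov candidate
\begin{equation*}
L(k)=L_e(k)+\tfrac{1}{\mu_{ci}}\,\mathrm{Tr}\bigl(\tilde W_{ci}^{\mathrm T}(k)\tilde W_{ci}(k)\bigr)+\tfrac{1}{\mu_{ai}}\,\mathrm{Tr}\bigl(\tilde W_{ai}^{\mathrm T}(k)\tilde W_{ai}(k)\bigr),
\end{equation*}
where $L_e(k)$ is an energy term in $\varepsilon_i(k)$ (for example, the $\mathcal V_{ei}$ used in Theorem~1, or $\beta^{k}V_i(\varepsilon_i)$) so that I can reuse Lemma~\ref{assumption3} and the contraction estimate $\|\varepsilon_i(k+1)\|\le\kappa\|\varepsilon_i(k)\|+\kappa\|e_l^i(k)\|$ that was already exploited to control $\Delta L_e$ by $-(1-2\kappa^{2})\|\varepsilon_i\|^{2}+2\kappa^{2}\|e_l^i\|^{2}$, with the event-triggering condition (\ref{triggering condition}) absorbing the $\|e_l^i\|$ piece whenever an event fires.

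Then I would compute $\Delta L(k)=L(k+1)-L(k)$ in the two regimes dictated by the event-triggered scheme. In each regime I expand the quadratic forms of $\tilde W_{ci}(k+1)$ and $\tilde W_{ai}(k+1)$, apply Young's inequality to the cross terms $\tilde W_{ci}^{\mathrm T}\psi_i\psi_i^{\mathrm T}W_{ci}^{*}$ and $\tilde W_{ai}^{\mathrm T}\psi_i\psi_i^{\mathrm T}\tilde W_{ci}$, and use $\lambda_{\max}(\psi_i\psi_i^{\mathrm T})\le\lambda_{\max}^{\psi_i\psi_i^{\mathrm T}}$ to get a bound of the schematic form
\begin{equation*}
\Delta L(k)\le -\Theta_e\|\varepsilon_i(k)\|^{2}-\Theta_c\|\tilde W_{ci}(k)\|_F^{2}-\Theta_a\|\tilde W_{ai}(k)\|_F^{2}+C,
\end{equation*}
where $C>0$ collects the bounded contributions from $W_{ai}^{*}$, $W_{ci}^{*}$, and the NN approximation residual $\sigma_T$. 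The crucial algebra is to show that the stability gains $\Theta_c,\Theta_a$ are strictly positive; this is exactly where the hypothesis $\mu_{ci}>\mu_{ai}>0$ and the step-size bound $T<(\mu_{ci}-\mu_{ai})/(\mu_{ci}^{2}\lambda_{\max}^{\psi_i\psi_i^{\mathrm T}})$ get used. Once positivity of $\Theta_e,\Theta_c,\Theta_a$ is established, the standard discrete-time UUB argument finishes the proof: $\Delta L(k)<0$ whenever $(\varepsilon_i,\tilde W_{ai},\tilde W_{ci})$ lies outside a ball of radius determined by $C$ and the $\Theta$'s, so all three errors remain in a compact neighbourhood of the origin for all $k$ large enough.

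The main obstacle I anticipate is the between-trigger regime. There $\hat W_{ai}$ is frozen but $\varepsilon_i$ still evolves under a stale control, so the cross term $\tilde W_{ai}^{\mathrm T}\psi_i(\varepsilon_i)$ in $\Delta L_e$ is not canceled by any actor update; I would have to lean on Lemma~\ref{assumption3} together with (\ref{triggering condition}) to trade this term against $\|\varepsilon_i(k)\|^{2}$, and then verify that the resulting coefficient is still compatible with the inequality $\mu_{ci}>\mu_{ai}$. A secondary, but purely technical, difficulty is keeping the book-keeping consistent between the two regimes so that the same composite $L$ decreases in both cases and the telescoping argument for UUB goes through without a change of Lyapunov function at each triggering instant.
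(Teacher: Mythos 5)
Your plan follows essentially the same route as the paper's proof: a composite Lyapunov function summing an energy term in $\varepsilon_i$ with weighted traces of $\tilde W_{ai}^{\mathrm T}\tilde W_{ai}$ and $\tilde W_{ci}^{\mathrm T}\tilde W_{ci}$, a case split between triggering instants and inter-trigger intervals, a first-difference bound whose negative-definiteness in the weight errors is secured precisely by $\mu_{ci}>\mu_{ai}>0$ and the step-size condition on $T$, and a standard discrete-time UUB conclusion with a bounded residual from the ideal weights. The paper's own write-up is terser (and even omits some of the algebra you anticipate), but the decomposition, the key inequality $\Delta\mathcal{V}\leq-\varsigma\mathcal{V}+\tau$, and the role of the gain conditions are the same as in your proposal.
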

\begin{figure*}[hb]
	\vspace{-0.5em}
	\centering
	\includegraphics[width=\linewidth,scale=1.00]{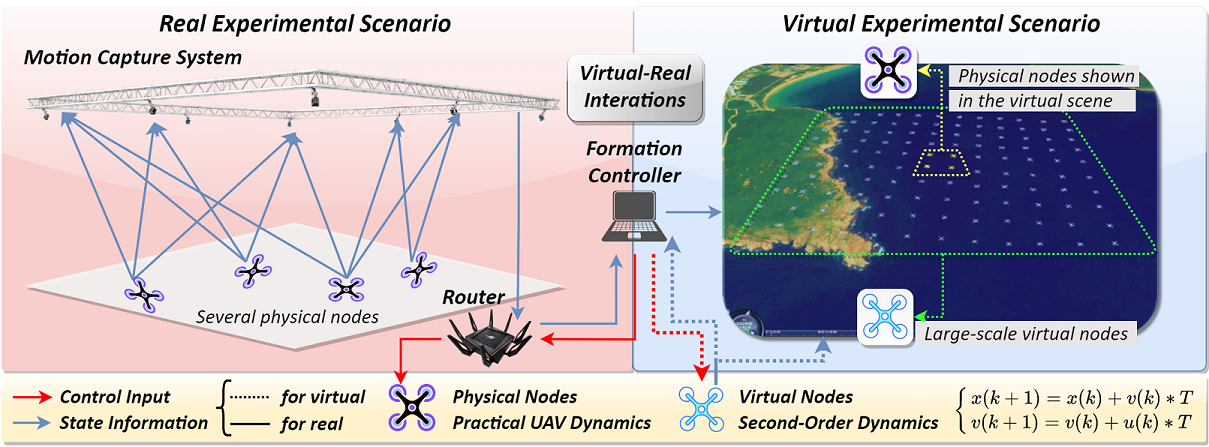}
	\caption{Mixed reality experimental platform framework. To implement the large-scale UAV switching formation control, several physical UAV nodes and large-scale virtual UAV nodes are used to achieve mixed reality interactions.}
	\label{Fig2}
\end{figure*}
\begin{proof}
	The stability analysis process will be divided into two parts according to the trigger instants $k\in\{k_l^i\}_{l\in\mathbb{Z}^+}$ and the trigger instant intervals $k\in[k_l^i,k_{l+1}^i)$.
	
	$\mathbf{(a)}$ For $\forall k\in[k_l^i,k_{l+1}^i)$, the triggering condition $f_i(k)\leq0$ is satisfied, then design the Lyapunov function as follows,
	\begin{equation}
		\begin{aligned}
			\mathcal{V}(k)
			&=\sum_{i=1}^{N}\varepsilon_i^\mathrm{T}(k)\varepsilon_i(k)+\dfrac{1}{2}\sum_{i=1}^{N}\mathrm{Tr}\left\lbrace\tilde{W}_{ai}^\mathrm{T}(k) \tilde{W}_{ai}(k)\right\rbrace\\ &~~~+\dfrac{1}{2}\sum_{i=1}^{N}\mathrm{Tr}\left\lbrace\tilde{W}_{ci}^\mathrm{T}(k) \tilde{W}_{ci}(k)\right\rbrace>0.
		\end{aligned}	
	\end{equation}
	For the first difference of $\mathcal{V}(k)$, 
	\begin{equation}
		\begin{aligned}
			\Delta\mathcal{V}(k)&\leq-\sum_{i=1}^N\dfrac{\mu_{ai}T}{2}\mathrm{Tr}\left(\tilde{W}_{ai}^\mathrm{T}(k)\psi_i\psi_i^\mathrm{T}\tilde{W}_{ai}(k)\right)\\
			&~~~-\sum_{i=1}^N\dfrac{(\mu_{ci}-\mu_{ai})T}{2}\mathrm{Tr}\left(\tilde{W}_{ci}^\mathrm{T}(k)\psi_i\psi_i^\mathrm{T}\tilde{W}_{ci}(k)\right)\\
			&~~~+\sum_{i=1}^N\dfrac{\mu_{ci}T}{2}\mathrm{Tr}\left(W_{ci}^*\psi_i\psi_i^\mathrm{T}{W}_{ci}^*\right),
		\end{aligned}
	\end{equation}
	where $\sum_{i=1}^N\dfrac{\mu_{ci}T}{2}\mathrm{Tr}\left(W_{ci}^*\psi_i\psi_i^\mathrm{T}W_{ci}^*\right)\leq\tau$, $\tau>0$. and $\varsigma=\min_{i\in[1,N]}\left\lbrace\mu_{ai}T\lambda_{\max}^{\psi_i\psi_i^\mathrm{T}},(\mu_{ci}-\mu_{ai})T\lambda_{\max}^{\psi_i\psi_i^\mathrm{T}}\right\rbrace$. Then, one has $\Delta\mathcal{V}(k)
			\leq-\varsigma\mathcal{V}_l(k)+\tau$.
	Then, with $k$ iterations, one has
	\begin{equation}
		\begin{aligned}
			\mathcal{V}(k)&\leq(1-\varsigma)^k\mathcal{V}_l(0)+\dfrac{\tau}{\varsigma}\left(1-(1-\varsigma)^k\right)\\
			&~~~-\mathcal{V}_l(0)+\sum_{i=1}^{N}\varepsilon_i^\mathrm{T}(0)\varepsilon_i(0).
		\end{aligned}
	\end{equation}
	
	$\mathbf{(b)}$ For $\forall k\in\{k_l^i\}_{l\in\mathbb{Z}^+}$, the controller $u_i^*(k_l^i)$ is updated, then design the Lyapunov function as follows,
	\begin{equation}
		\begin{aligned}
			\mathcal{V}'(k)
			&=\sum_{i=1}^N\beta^kV_i(\varepsilon_i(k))+\dfrac{1}{2}\sum_{i=1}^{N}\mathrm{Tr}\left\lbrace\tilde{W}_{ai}^\mathrm{T}(k) \tilde{W}_{ai}(k)\right\rbrace\\ &~~~+\dfrac{1}{2}\sum_{i=1}^{N}\mathrm{Tr}\left\lbrace\tilde{W}_{ci}^\mathrm{T}(k) \tilde{W}_{ci}(k)\right\rbrace>0.
		\end{aligned}
	\end{equation}
	Similarly to the previous case, $\mathcal{V}'(k)$ is bounded.
	For both cases, the disagreement error $\varepsilon(k)$ of the system and all the estimation errors $\tilde{W}_{ai}(k)$, $\tilde{W}_{ci}(k)$ are UUB.
\end{proof}

\section{Mixed Reality Experiment}\label{sec5}
In this section, a mixed reality experimental platform is constructed to verify the proposed optimal assignment and optimal control algorithms for large-scale UAV formation.
\subsection{Mixed Reality Experimental Platform}
In order to verify the effectiveness of the algorithm, the mixed reality large-scale UAV experimental platform shown in Fig. \ref{Fig2} is established, including a real experimental scenario and a virtual experimental scenario. The real scenario contains only several physical UAV nodes limited by the experimental site, and their state information is obtained by the motion capture system -- OptiTrack, and transmitted to the formation controller via a router. In contrast, the virtual scenario can contain hundreds of virtual UAV nodes, whose dynamics model is designed as a second-order model. The formation controller uses the obtained state information of all nodes to calculate the control input of each node and send it back to each node for formation control. And to visualize the formation effect, the state information of physical nodes is also sent to the virtual scene for real-time display.

In this experiment, 4 physical UAVs and 116 virtual UAVs, a total of 120 UAVs, are used to achieve a large-scale formation. The Laplacian matrix used in the system is
\begin{equation*}
	\setlength{\arraycolsep}{1.2pt}
	\renewcommand{\arraystretch}{1.6}
	L=\left[
	\begin{array}{cccccccc}
		4&-1&-1&0&\cdots&0&-1&-1\\
		-1&4&-1&-1&0&\cdots&0&-1\\
		-1&-1&4&-1&-1&0&\cdots&0\\
		\cdots&\ddots&\ddots&\ddots&\ddots&\ddots&\cdots&\cdots\\
		-1&-1&0&\cdots&0&-1&-1&4
	\end{array}
	\right]\in\mathbb{R}^{120\times 120}.
\end{equation*}
The diagonal matrix $B=\mathrm{diag}(\mathrm{mod}(N,2))$. And the other parameters are listed in Table \ref{table_1}.
\begin{table}[!h]\footnotesize
	\renewcommand{\arraystretch}{1.2}
	\caption{Experimental setup}
	\vspace*{10pt}
	\centering
	\label{table_1}
	\begin{tabular}{l l l}
		\hline\hline 
		Parameter & Value & Description  \\[0.3ex] \hline
		r & 0.14 & Radius of UAV (m) \\
		s & 60 &  RBF NN nodes \\  
		$\iota_i$ & 1 &Gaussian Basis Function width\\ 
		$\nu_i$& [-3,3] & Gaussian Basis Function center. \\ 
		$\hat{W}_{ai}(0)$ & $\{0.3\}_{60\times2}$ & Initial actor NN weights \\
		$\hat{W}_{ai}(0)$ & $\{0.3\}_{60\times2}$ & Initial critic NN weights\\
		$\mu_{ai}$ & 6 & Actor NN gain\\	
		$\mu_{ci}$ & 8 & Critic NN gain\\
		$\alpha$ & $[6,4]$ & Parameter in controller $u(k)$\\
		$T$ &$0.01$ & Sampling period (s)\\ [0.8ex]
		\hline\hline
	\end{tabular}
\vspace{-1.5em}
\end{table}

\subsection{Results Analysis}
As shown in Fig. \ref{Fig3}, the large-scale UAS successfully generates collision-free optimal formation switching trajectories between multiple formations using the optimal assignment algorithm. With the optimized formation tracking controller $(\ref{controller NN})$, the large-scale mixed reality UAV nodes can follow the generated expected trajectories without collision. And the formation tracking error $\xi(k)$ is bounded as shown in Fig. \ref{Fig4}. With the event-triggered mechanism, the controller update frequency per UAV is significantly reduced. As shown in Fig. \ref{Fig5}, the event-triggered algorithm can reduce the update frequency to 27\% of the time-triggered algorithm, which updates the controller continuously or periodically. And the triggering instants sequence for UAVs $1-30$ in the last $1000$ iterations is also illustrated, where the triggering instants for each UAV are aperiodically and on-demand. Fig. \ref{Fig6} illustrates that the actor and critic NN weights are UUB. And the optimal formation tracking control problem is solved.
\begin{figure}[!htbp]
	\subfigure[Cross-shaped mixed reality large-scale formation]{
		\includegraphics[width=0.95\columnwidth]{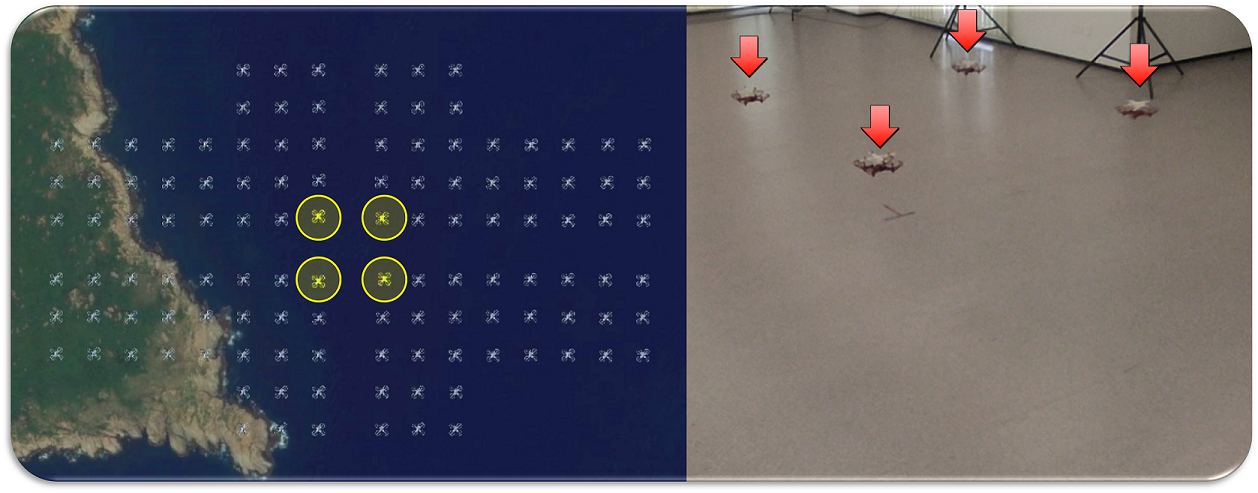}
	}
	\subfigure[Circular mixed reality large-scale formation]{
		\includegraphics[width=0.95\columnwidth]{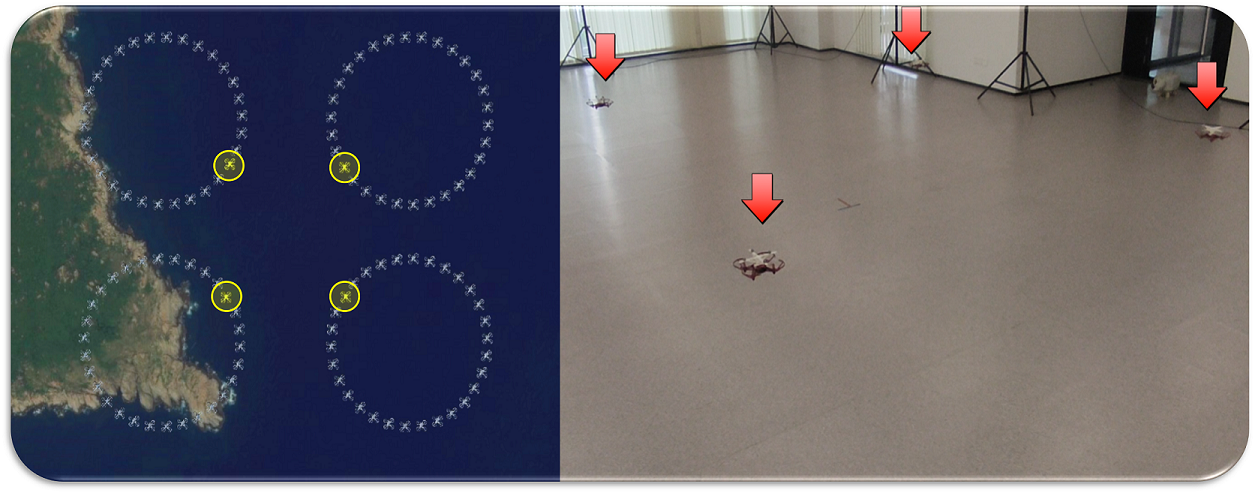}
	}
\vspace{1em} 
	\caption{The large-scale UAV switching formations in the experiment. \label{Fig3}}
	\vspace{-1.5em} 
\end{figure}
\subsection{Summary}
The mixed reality platform built in this study combines the strong engineering applicability of physical experimental platforms with the high flexibility and low cost of virtual simulation platforms. On the one hand, compared with the fully virtual simulations, it can more accurately verify the effectiness of the proposed algorithm for practical UAV dynamics and real-world experimental scenes. On the other hand, compared with the fully physical experiments, the platform has significant advantages in experimental cost and field requirements, which can achieve a heavyweight UAV formation show with lightweight experimental equipment.

\section{Conclusion}\label{sec6}
For the large-scale switching formation tracking problem, this study introduced the optimal formation assignment algorithm and proposed an optimal controller, which broken the main technical limits in decision and control layers of large-scale UASs. Further, compared with the time-triggered controller, the designed event-triggered algorithm reduced the update rate for controller and actor NN to 27\%. More notably, this paper integrated the strong engineering applicability of practical experiments and high flexibility of virtual platforms to practically achieve large-scale switching formation. However, this study ignored the spatial constraints within the practical scene and the dynamics accuracy of the virtual UAV. In future work, autonomous obstacle avoidance in large-scale switching formation will be realized, and the virtual UAV dynamics will be improved in more detail.

\begin{figure}[!htbp]
		\vspace{-1em}
	\centering
	\includegraphics[width=\columnwidth]{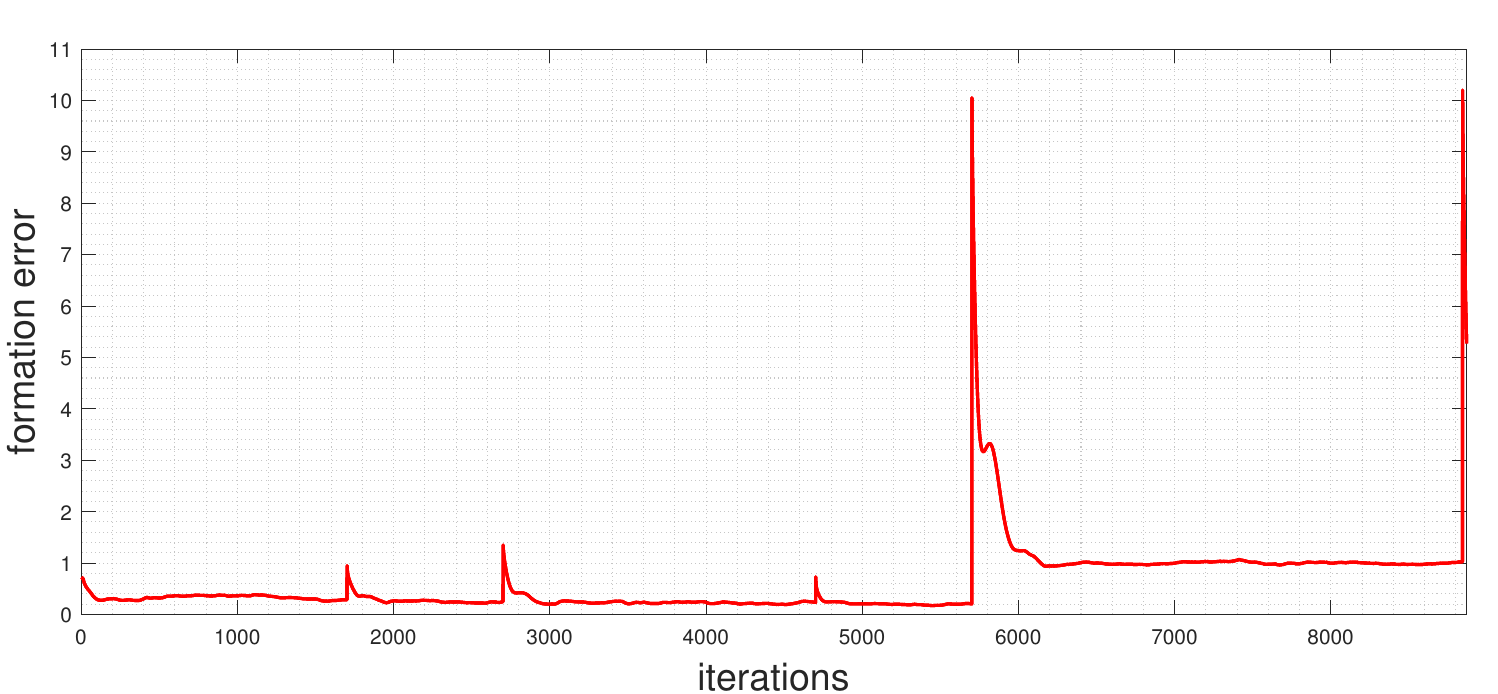}
	\vspace{1em}
	\caption{The norm of formation tracking error $||\xi(k)||$. The moments when the tips appear in the curve are the switch from one formation to another. \label{Fig4}}
\end{figure}
\begin{figure}[!htbp]
		\vspace{-1em}
	\centering
	\includegraphics[width=\columnwidth]{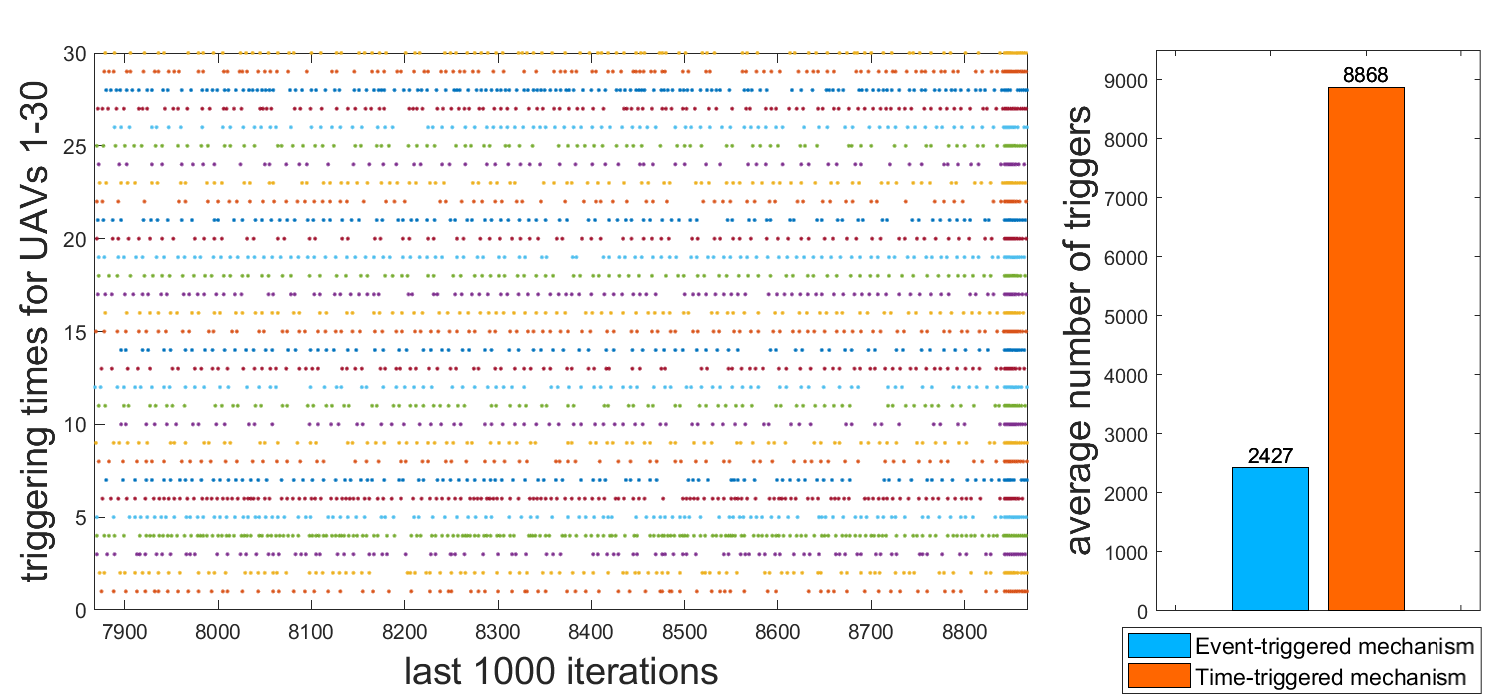}
	\vspace{1em}
	\caption{The triggering instants sequence 
		and the mean triggers for the event-triggered mechanism compared to the time-triggered mechanism. 
		\label{Fig5}}
\end{figure}
\begin{figure}[!htbp]
		\vspace{-1em}
	\centering
	\includegraphics[width=\columnwidth]{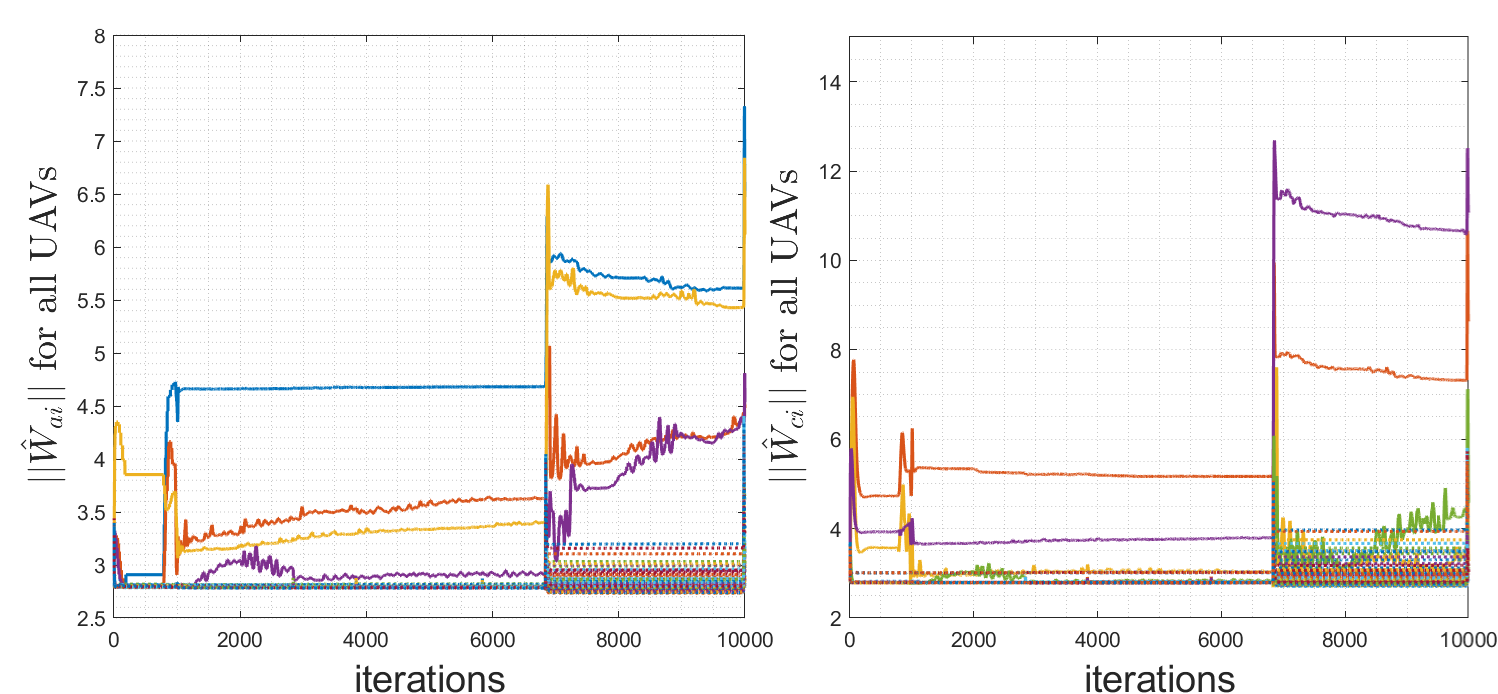}
	\vspace{1em}
	\caption{Actor-critic NN weight norm for all UAVs, where the solid lines represent the physical UAVs and the dotted lines represent the virtual UAVs. \label{Fig6}}
\end{figure}

\addtolength{\textheight}{-6cm}   




 \end{document}